\theoremstyle{definition}
\newtheorem{ass}{Assumption}[]
\newtheorem{defin}{Definition}[]
\newtheorem{theo}{Theorem}[]
\newtheorem{rem}{Remark}[]
\newlength\figureheight
\newlength\figurewidth
\newcommand{\Eqref}[1]{\textup{\tagform@{\ref*{#1}}}}
\newcommand\copyrighttext{%
\Huge \textcopyright 2017 IEEE. Personal use of this material is permitted. Permission from IEEE must be obtained for all other uses, in any current or future media, including reprinting/republishing this material for advertising or promotional purposes, creating new collective works, for resale or redistribution to servers or lists, or reuse of any copyrighted component of this work in other works.}
\newcommand\copyrightnotice{%
\begin{tikzpicture}[remember picture,overlay]
\node[anchor=center] at (current page.center) {\fbox{\parbox{\dimexpr\textwidth-\fboxsep-\fboxrule\relax}{\copyrighttext}}};
\end{tikzpicture}%
}
\pgfplotsset{
compat=newest,
every axis/.append style={
font=\tiny
},
tick style={thin,black},
minor tick style={very thin,gray},
every axis legend/.append style={
draw=black,
fill=white,
shape=rectangle,
cells={anchor=center},
inner xsep=3pt,
inner ysep=2pt,
font=\tiny
}}
\begin{document}

\copyrightnotice

\title{\LARGE \bf
Persistently Exciting Tube MPC
}
\author{Bernardo Hernandez$^{1,a}$ and Paul Trodden$^{2}$%
\thanks{$^{1}$
Bernardo Hernandez is a PhD student at the Dept. of Automatic Control and Systems Engineering, The University of Sheffield, Sheffield, S1 4DW, UK {\tt\small \href{mailto:bahernandezvicente1@sheffield.ac.uk}{bahernandezvicente1@sheffield.ac.uk}}}%
\thanks{$^{2}$Paul Trodden is with the Dept. of Automatic Control and Systems Engineering, The University of Sheffield, Sheffield, S1 4DW, UK {\tt\small \href{mailto:p.trodden@sheffield.ac.uk}{p.trodden@sheffield.ac.uk}}}%
\thanks{$^{a}$Bernardo Hernandez acknowledges doctoral scholarship from CONICYT - PFCHA/Concurso para Beca de Doctorado en el Extranjero - 72150125.}%
}
\maketitle

\begin{abstract}
\boldmath This paper presents a new approach to deal with the dual problem of system identification and regulation. The main feature consists in breaking the control input to the system into a regulator part and a persistently exciting part. The former is used to regulate the plant using a robust MPC formulation, in which the latter is treated as a bounded additive disturbance. The identification process is executed by a simple recursive least squares algorithm. In order to guarantee sufficient excitation for the identification, an additional non-convex constraint is enforced over the persistently exciting part.
\end{abstract}

\section{Introduction} \label{sec:Intro}
The performance and optimality of a model predictive controller are limited by the accuracy of the model used to make the predictions \cite{Adetola2011,Aswani2013}. If the model is a poor representation of reality, the resulting control actions will not be optimal for the latter, causing unexpected behaviour. Moreover, desirable properties in model predictive control (MPC) formulations, such as stability and feasibility \cite{Mayne2013,Boccia2014}, often require the computation of invariant sets, which are model dependent \cite{Kolmanovsky}. Adaptive MPC, as a way to cope with model uncertainty, has been receiving increasing amounts of attention from researchers in the last ten years, but it remains a largely open problem \cite{Mayne2014}. Adaptive MPC attempts to couple closed-loop system identification with regulation through an MPC controller; an inherent difficulty of such design is that the two objectives are incompatible. This is referred to as the \emph{dual control problem}: while the controller tries to steady the system, the identifier needs to excite it \cite{Goodwin1984}.

A key challenge for adaptive MPC is how to maintain the stability and feasibility guarantees, particularly when hard constraints are considered. In \cite{Mamboundou2011} the model of an unconstrained plant is updated through a modified recursive least squares (RLS) algorithm, while a fuzzy supervisor attempts to update the controller parameters based on performance criteria. These include a numeric evaluation of the system stability, but no guarantee is given. \cite{Balderud2011} uses a single value decomposition based algorithm to estimate, on-line, a state space model of the controlled system; stability is shown solely through numerical simulations. A set membership identification scheme is used in \cite{Tanaskovic2014}, coupled with additional output constraints in the optimization problem. These extra constraints ensure boundedness of the system response and hence constraint satisfaction, however open-loop stability is assumed.
Other authors have addressed the issue of stability by making suitable assumptions, such as in \cite{Fukushima2007,Adetola2011,Wang2014a}, where the response error produced by the uncertainty of the model is treated as a bounded disturbance. This allows for robust MPC implementations to be used under some supplementary assumptions, such as a known bound on the initial estimates error \cite{Fukushima2007}, \cite{Adetola2011}. A novel algorithm is developed in \cite{Aswani2013}, where two models of the plant are maintained by the controller; a nominal model is used to provide feasibility and stability guarantees, while the second, adaptive, model is used to improve performance.
However, none of these approaches consider the dual control problem in an explicit way. In \cite{Fukushima2007} the identifier chooses suitable past data for the computation of estimate bounds, but nothing ensures the existence of this data. In \cite{Adetola2011} a min-max robust approach accounts for future estimation but does not directly promote proper excitation.

In the MPC context, this matter has been addressed in different ways. In \cite{Genceli1996,Vuthandam1997a, Shouche1998,Marafioti2013,Larsson2015,Heirung2015}, an additional constraint over the input is explicitly added to guarantee enough information on the output.
However, feasibility of the optimization may be lost due to this constraint, which is in most cases non-convex. To overcome this, a relaxation variable is employed in \cite{Genceli1996,Shouche1998}, yet this may result in insufficient excitation in favour of feasibility. The receding horizon fashion of MPC (i.e., that only the first part of the optimized input sequence is applied) is exploited in \cite{Marafioti2013}, where the additional constraint is applied exclusively over the first element of the predicted input sequence; this concept is also studied in \cite{Heirung2015}, where it is shown that extending this type of constraint beyond one time step into the prediction horizon has a negligible impact on the identification results.
Alternatively, a two-step optimization is performed in \cite{Rathousky2013,Zacekova2013}; the first step solves a standard MPC problem and the second step adds an exciting behaviour to the optimized sequence, while limiting the cost increase (reduced optimality). The concept of zone-tracking MPC is used in \cite{Ferramosca2013a} to drive the state of the plant to an invariant set, inside which a persistently exciting input sequence can be safely applied. In recent implementations \cite{Weiss2014,Heirung2015}, the MPC cost function is augmented with a term depending on the variance of the estimated parameters error, in an attempt to force the optimizer to choose an exciting input sequence, that automatically decays as the system is correctly identified.

In this paper, the dual problem of regulation and system identification is addressed within the frame of robust MPC; the main feature of the present algorithm is the division of the input signal. The first part, called the persistently exciting (PE) part, aims to generate enough information for the identification process, while the second (regulator) part is designed following the main objective of regulating the plant to the desired steady state. From the control perspective, the PE part of the input may be treated as a bounded disturbance, hence a standard tube MPC formulation \cite{Mayne2005} is a suitable selection for the regulation task. This allows to maintain the classic form of the optimization problem, unlike \cite{Rathousky2013,Zacekova2013,Weiss2014,Heirung2015}, and also helps to provide a guarantee of asymptotic stability. 
A standard RLS algorithm \cite{Ljung1999} with forgetting factor is used for the identification process. To achieve convergence of the estimates, an additional constraint based on the persistence of excitation theory \cite{Goodwin1984} is included in the optimization. This ensures not only an accurate estimation but also that the PE input is automatically defined by the optimizer (contrary to the approach in \cite{Ferramosca2013a}, where it must be computed off-line). The PE constraint proposed in \cite{Marafioti2013} is tightened so as to guarantee recursive feasibility while allowing for increased optimality.
The disturbances considered at this stage arise from model mismatch and the excitation part of the input, thus the approach is inherently deterministic.

The paper is organized as follows: Section \ref*{sec:PS} defines the problem and required preliminaries. Section \ref*{sec:PETMPC} describes the proposed persistently exciting tube MPC (PE-Tube MPC). Stability and feasibility are established in Section \ref*{sec:SF}. Section \ref*{sec:SR.1} contains numerical simulation results.

\emph{Notation}: The operators $\oplus$ and $\ominus$ denote the Minkowski sum and the Pontryagin difference (respectively), defined as in \cite{Rawlings2014}. The set $\mathbb{N}_{0}$ is the set of all the positive integers including 0. The zero vector and the identity matrix in $\mathbb{R}^n$ are represented respectively by $\boldsymbol{0}_n$ and $I_n$.

\section{Problem statement and preliminaries} \label{sec:PS}
The problem is to regulate a linear time invariant system, subject to input and state constraints, for which only a nominal discrete time state space model is available. Define,
\begin{subequations} \label{eq:PS.0}
	\begin{alignat}{2}
	&\left(\bar{A},\bar{B}\right)&&\quad\text{Nominal model.} \label{eq:PS.0.1}\\
	&\left(A,B\right)&&\quad\text{Real system.} \label{eq:PS.0.2}\\
	&\left(\tilde{A}(i),\tilde{B}(i)\right)&&\quad\text{Estimated model at time $i$.}\label{eq:PS.0.3}
	\end{alignat}
\end{subequations}
The nominal state space model takes the following form,
\begin{subequations} \label{eq:PS.1}
\begin{alignat}{1}
x(i+1)&=\bar{A}x(i)+\bar{B}u(i) \label{eq:PS.1.1}\\
x(i) \in \mathbb{X}\subset\mathbb{R}^n,&\quad u(i) \in \mathbb{U}\subset\mathbb{R}^m,\qquad\forall i \in \mathbb{N}_{0} \label{eq:PS.1.3}
\end{alignat}
\end{subequations}
in which $x(i)$ and $u(i)$ are the state and input vectors at time $i$. The following common assumptions are supposed to hold.

\begin{ass}[Stabilizability] \label{ass:PS.1}
The pairs $\left(A,B\right)$, $\left(\bar{A},\bar{B}\right)$ and $\left(\tilde{A}(i),\tilde{B}(i)\right)$ are stabilizable $\forall i\in\mathbb{N}_{0}$.
\end{ass}

\begin{ass}[Properties of constraint sets] \label{ass:PS.3}
The set $\mathbb{X}$ is closed and the set $\mathbb{U}$ is compact. Both contain the origin.
\end{ass}

\subsection{Standard MPC formulation} \label{sec:PS.MPC}
The standard MPC optimization problem for the system in \Eqref{eq:PS.1} with prediction horizon $N$ at time $i$ is $\mathbb{P}_N\big(x=x(i)\big)$:
\begin{equation} \label{eq:PS.2}
\min_{u}\sum_{k=0}^{N-1}\underbrace{\left(x^{\top}(k)Qx(k)+u^{\top}(k)Ru(k)\right)}_{\ell(x,u)} + V_{f}\big(x(N)\big)
\end{equation}
subject to:
\begin{subequations} \label{eq:PS.3}
\begin{alignat}{2}
x(0)&=x&&  \label{eq:PS.3.0}\\
x(k+1)&=\bar{A}x(k)+\bar{B}u(k)&&  \label{eq:PS.3.1}\\
x(k) \in \mathbb{X},&\quad u(k) \in \mathbb{U},&&\quad k=0,1,\ldots,N-1  \label{eq:PS.3.3}\\
x(N) &\in \mathbb{X}_{f}\subseteq \mathbb{X}&&  \label{eq:PS.3.4}
\end{alignat}
\end{subequations}
where $(Q,R)$ are the state and input weight matrices. Once the optimization is solved, the first part of the optimal input sequence is applied to the plant, a new state measurement is taken, and the process is repeated.

It is well known that, under Assumptions \ref*{ass:PS.1} and \ref*{ass:PS.3}, an appropriate selection of the weight matrices, terminal cost $V_{f}(\cdot)$ and terminal constraint $\mathbb{X}_{f}$ provides closed-loop asymptotic stability of the origin \cite{Rawlings2014} for the nominal system. In particular, we use the following standard assumption:

\begin{ass}[Stability assumption] \label{ass:PS.2}
$V_{f}:\mathbb{X}_{f}\rightarrow\mathbb{R}_{\geq0}$ is continuous and $V_{f}(\boldsymbol{0}_n)=0$. $Q$ is positive semidefinite and $R$ is positive definite. The set $\mathbb{X}_f$ is a closed control invariant set for \eqref{eq:PS.3.1}, contains the origin in its interior, and,
\begin{equation*} \label{ass:PS.2.1}
\exists u\in\mathbb{U}\;\mbox{s.t.}\;V_{f}\left(\bar{A}x+\bar{B}u\right)+\ell(x,u)\leq V_{f}(x),\;\;\forall x\in\mathbb{X}_{f}
\end{equation*}
\end{ass}

If the initial state is inside the feasibility region defined by constraints \eqref{eq:PS.3.3} and \eqref{eq:PS.3.4} (called from now on $\mathcal{X}_{N}$), recursive feasibility of the optimization can be guaranteed \cite{Rawlings2014}.

\subsection{Persistence of excitation} \label{sec:PS.PE}
For many reasons, the nominal model $\left(\bar{A},\bar{B}\right)$ may not be an accurate representation of the real system $\left(A,B\right)$. This could have a detrimental effect on the performance and stability of the MPC controlled system; therefore, to reduced model uncertainty, some form of closed-loop system identification should be implemented. Note that a state space model may be regarded as a system of $ARX(1,1)$ models, hence a predictor can be built for each component of the state vector,
\begin{subequations} \label{eq:PS.5}
\begin{alignat}{2}
\hat{x}_{j}(i)&=\phi^{\top}(i-1)\tilde{\theta}_{j}(i),\quad && j=1,2,\ldots,n \label{eq:PS.5.1}\\
\phi^{\top}(i)&=\left[x^{\top}(i) \:\: u^{\top}(i)\right] \label{eq:PS.5.2}\\
\tilde{\theta}_{j}(i)&=\left[\tilde{A}_{j}(i) \:\: \tilde{B}_{j}(i)\right]^{\top},\quad && j=1,2,\ldots,n \label{eq:PS.5.3}
\end{alignat}
\end{subequations}
In \Eqref{eq:PS.5}, $\hat{x}_j(i)$ represents the prediction of the state component $j$, at time $i$, $\phi(i)$ is the regressor vector and $\left(\tilde{A}_{j}(i),\tilde{B}_{j}(i)\right)$ are the $j^{\text{th}}$ rows of the currently estimated matrices \eqref{eq:PS.0.3}. A standard RLS algorithm with constant forgetting factor $\lambda$ \cite{Ljung1999} is employed to identify a new model every time step. The recursion at time $i$ is computed as follows,
\begin{subequations} \label{eq:SF.5}
\begin{alignat}{2}
\tilde{\theta}_{j}(i)=&\tilde{\theta}_{j}(i-1)+R^{-1}_{\text{ID}}(i)\phi(i)\big[x_{j}&&(i)\notag\\
&-\phi^{\top}(i-1)\tilde{\theta}_{j}(i-1)\big],&&\quad j=1,2,\ldots,n\label{eq:SF.5.1}\\
R_{\text{ID}}(i)=&\lambda R_{\text{ID}}(i-1)+\phi(i)\phi^{\top}(i),&&\quad j=1,2,\ldots,n\label{eq:SF.5.2}
\end{alignat}
\end{subequations}
Convergence of the estimated parameters $\tilde{\theta}_{j}$ is guaranteed if the regressor is a strongly persistently exciting sequence \cite{Goodwin1984}.

\begin{defin}[Strongly persistently exciting sequence] \label{defin:PS.1}
The sequence $\{\phi(i)\}=\phi(0),\phi(1),\cdots,\phi(i)$, is said to be \emph{strongly persistently exciting} of order $N_p$ at time $i$, if there exists an integer $l_p$ and real numbers $\rho_{0},\rho_{1}>0$ such that,
\begin{equation*} 
\begin{aligned}
\rho_{1}I_{(n+m)N_p}>&\sum_{j=0}^{l_p-1}\left(\boldsymbol{\phi}_{i-j}\boldsymbol{\phi}^{\top}_{i-j}\right)>\rho_{0}I_{(n+m)N_p} \\
\boldsymbol{\phi}_{i-j}&=\left[\begin{array}{c}
\phi(i-j)\\
\phi(i-j-1)\\
\vdots\\
\phi(i-j-Np+1)
\end{array} \right] 
\end{aligned}
\end{equation*}
\end{defin}

The variable $N_p$ defines the length of a time window that is going to be observed and the variable $l_p$ defines the number of time instants into the past that this window will be observed. Definition \ref*{defin:PS.1} is identical to definition 3.4.A given in \cite{Goodwin1984} but after a time shift, which objective is to set the current time $i$ as the upper time limit (i.e. the window is placed at time $i$ and it moves backwards). In this way, coupling with the receding horizon fashion of MPC is achieved in a straightforward way.

Persistence of excitation of the regressor vector is not a suitable condition to use as a constraint in the MPC context, mainly because the state vector is not an explicit decision variable of the optimization \Eqref{eq:PS.2}--\Eqref{eq:PS.3}. Within the MPC framework, it is more convenient to focus on the input, and how the persistence of excitation propagates to the regressor. To do this, the concept of state reachability is employed.

\begin{defin}[State reachability] \label{defin:PS.2}
System \Eqref{eq:PS.1} is said to be \emph{state reachable} if, for any $x\in\mathbb{X}$, there exists an input sequence $\{u(j)\in\mathbb{U}\}_{j=0:s<\infty}\ $ such that at time $s$, $x(s)=x$.
\end{defin}

\begin{theo}[Persistence of excitation of reachable systems] \label{theo:PS.1}
The sequence $\{\phi(i)\}=\phi(0),\phi(1),\cdots,\phi(i)$, with $\phi(\cdot)$ defined as in \Eqref{eq:PS.5.2}, is said to be \emph{strongly persistently exciting} of order $N_p$ at time $i$ if, the system \Eqref{eq:PS.5.1} is state reachable and there exists an integer $l_p$ and real numbers $\rho_{0},\rho_{1}>0$ such that,
\begin{subequations} \label{eq:PS.7}
\begin{align}
\rho_{1}I_{mN_p}>&\sum_{j=0}^{lp-1}\left(\boldsymbol{u}_{i-j}\boldsymbol{u}^{\top}_{i-j}\right)>\rho_{0}I_{mN_p} \label{eq:PS.7.1}\\
\boldsymbol{u}_{i-j}&=\left[\begin{array}{c}
u(i-j)\\
u(i-j-1)\\
\vdots\\
u(i-j-N_p+1) \label{eq:PS.7.2}
\end{array} \right]
\end{align}
\end{subequations}
\end{theo}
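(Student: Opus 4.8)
The plan is to exploit the dynamics \Eqref{eq:PS.1.1}: every state that appears in the regressor is a linear image of the applied inputs, so persistence of excitation of the input must propagate to the full regressor $\phi(\cdot)$, \emph{provided} the input-to-state map loses no rank. That last property is exactly what reachability (Definition \ref*{defin:PS.2}) supplies. Concretely, I would realize the regressor covariance of Definition \ref*{defin:PS.1} as a congruence transform of the input covariance in \Eqref{eq:PS.7.1} through a fixed linear map built from $(A,B)$, and then read off the new constants $\rho_0',\rho_1'$.

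First I would iterate \Eqref{eq:PS.1.1} across each length-$N_p$ window. Writing $s_j:=i-j-N_p+1$ for the first index of the $j$-th window,
\begin{equation*}
x(s_j+p)=A^{p}x(s_j)+\sum_{l=0}^{p-1}A^{\,p-1-l}B\,u(s_j+l),\qquad p=0,\dots,N_p-1.
\end{equation*}
Stacking these identities expresses the regressor of Definition \ref*{defin:PS.1} as $\boldsymbol{\phi}_{i-j}=\Gamma_x\,x(s_j)+\Gamma_u\,\boldsymbol{u}_{i-j}$, with constant matrices $\Gamma_x,\Gamma_u$ assembled from $I,A,\dots,A^{N_p-1}$ and $B$; note that $\Gamma_u$ carries identity blocks in the input slots, since each $u(\cdot)$ enters $\phi(\cdot)$ directly. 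Collecting the base state and the stacked input \Eqref{eq:PS.7.2} into $\boldsymbol{w}_{i-j}$ gives
\begin{equation*}
\sum_{j=0}^{l_p-1}\boldsymbol{\phi}_{i-j}\boldsymbol{\phi}^{\top}_{i-j}=\Gamma\Big(\sum_{j=0}^{l_p-1}\boldsymbol{w}_{i-j}\boldsymbol{w}^{\top}_{i-j}\Big)\Gamma^{\top},\quad \boldsymbol{w}_{i-j}=\begin{bmatrix}x(s_j)\\ \boldsymbol{u}_{i-j}\end{bmatrix},\ \ \Gamma=[\,\Gamma_x\ \ \Gamma_u\,].
\end{equation*}

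Next I would invoke Definition \ref*{defin:PS.2} (reachability, equivalently full rank of the controllability matrix $[\,B\ AB\ \cdots\ A^{n-1}B\,]$) to guarantee that the state block cannot annihilate any direction of input excitation, so that $\Gamma$ has full column rank with singular values confined to a fixed interval $[\underline\sigma,\overline\sigma]$ depending only on $(A,B,N_p)$. Since $\mathbb{U}$ is compact (Assumption \ref*{ass:PS.3}), the base states over a finite window are bounded, so the hypothesis \Eqref{eq:PS.7.1} transfers through the displayed congruence to lower and upper bounds on the regressor covariance, producing the constants $\rho_0',\rho_1'$ required by Definition \ref*{defin:PS.1}; the upper bound is routine and the lower bound is where reachability does the work.

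The main obstacle I anticipate is the base-state coupling together with the fact that the blocks of $\boldsymbol{\phi}_{i-j}$ are themselves linked by the dynamics: each relation $x(k{+}1)=Ax(k)+Bu(k)$ is a linear dependence among consecutive blocks, so the regressor covariance is excited only on the range of $\Gamma$ rather than on all of $\mathbb{R}^{(n+m)N_p}$. The crux is therefore to prove, using reachability, that this range captures every direction induced by the input excitation \Eqref{eq:PS.7.1} (so that the bound of Definition \ref*{defin:PS.1} is meaningful on the identifiable subspace), and to choose $l_p$ large enough, while tracking $\underline\sigma,\overline\sigma,\rho_0',\rho_1'$ explicitly, so that strict positive definiteness is preserved where it is needed for convergence of the recursion \Eqref{eq:SF.5.1}.
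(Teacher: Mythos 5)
The paper offers no proof of this result at all: its entire ``proof'' is a citation to Theorem~2.1 of \cite{Green1985}, so your argument is necessarily a reconstruction rather than a parallel of anything in the paper. Judged on its own terms, the skeleton you build exposes the central difficulty rather than resolving it, and the step you defer to ``the crux'' cannot be repaired in the form you propose. Your congruence identity $\sum_{j}\boldsymbol{\phi}_{i-j}\boldsymbol{\phi}^{\top}_{i-j}=\Gamma\bigl(\sum_{j}\boldsymbol{w}_{i-j}\boldsymbol{w}^{\top}_{i-j}\bigr)\Gamma^{\top}$ has $\Gamma\in\mathbb{R}^{(n+m)N_p\times(n+mN_p)}$, so the left-hand side has rank at most $n+mN_p$, which is strictly smaller than $(n+m)N_p$ whenever $N_p\geq 2$; a matrix of deficient rank can never dominate $\rho_{0}I_{(n+m)N_p}$, for any $l_p$ and any constants. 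Worse, reachability is not what gives $\Gamma$ full column rank --- that is automatic, since $\Gamma$ contains an identity submatrix (the input slots of $\boldsymbol{\phi}_{i-j}$ are copied verbatim from $\boldsymbol{u}_{i-j}$ and the earliest state slot is $x(s_j)$ itself) --- and full column rank of a tall matrix still leaves $\Gamma M\Gamma^{\top}$ singular. So the precise point at which you plan to ``invoke reachability'' does no work, and the singular-value sandwich $[\underline\sigma,\overline\sigma]$ cannot deliver the lower bound of Definition~\ref*{defin:PS.1}.

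The substantive content of the cited result is the transfer of excitation from the input to the \emph{unstacked} regressor, i.e.\ a bound of the form $\sum_{k}\phi(k)\phi^{\top}(k)\geq\rho_{0}'I_{n+m}$, where the dimension count is consistent and reachability genuinely matters: one assumes a direction $[\eta^{\top}\;\xi^{\top}]$ annihilating the sum, propagates it through $x(k+1)=\bar{A}x(k)+\bar{B}u(k)$ and the controllability matrix $[\bar{B}\;\bar{A}\bar{B}\;\cdots\;\bar{A}^{n-1}\bar{B}]$, and contradicts the input condition \Eqref{eq:PS.7.1}. That contradiction/Cayley--Hamilton argument is qualitatively different from a fixed linear change of coordinates, and it is the missing idea here. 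To salvage your framework you would have to either restrict the conclusion to the range of $\Gamma$ (the identifiable subspace) and argue that this restricted positivity is what the recursion \Eqref{eq:SF.5} actually requires, or prove the order-one statement directly. A smaller gap: your upper bound $\rho_{1}'$ relies on boundedness of the states over the window, but Assumption~\ref*{ass:PS.3} only makes $\mathbb{X}$ closed, not compact, so boundedness must be argued separately.
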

\begin{proof}
This proof can be found in \cite{Green1985} (Theorem 2.1).
\end{proof}

In \cite{Marafioti2013} it is shown that the lower bound of \eqref{eq:PS.7.1} characterizes the outside of an ellipsoid, hence the PE constraint is non-convex. Also note that $\boldsymbol{u}_{i-j}=\left[\boldsymbol{0}_{m(N_p-1)}\right]$ (or any fixed value) violates \eqref{eq:PS.7.1}, therefore regulation to a steady state and persistence of excitation cannot be simultaneously attained.

\section{Tube MPC with persistence of excitation} \label{sec:PETMPC}
The main contribution of this paper is presented in this section, the underlying idea is to include a persistence of excitation constraint in a standard MPC formulation. This is done from a robust control perspective, where the excitation is regarded as a bounded disturbance. The proposed approach uses tube MPC, which is a robust control technique with guaranteed stability under bounded additive uncertainties, but complexity similar to conventional MPC \cite{Rawlings2014,Mayne2005}.

\subsection{Tube MPC for uncertain systems with partitioned input} \label{sec:PETMPC.TMPC}
Tube MPC solves the regulation problem for an undisturbed nominal model, while securing that the state of the uncertain system will always be in a robust positive invariant (RPI) set \cite{Kolmanovsky}, centred around the nominal system trajectory. This robust control technique is inherently capable of dealing with model uncertainties as long as these can be \emph{quantified}, i.e., treated as a bounded additive disturbance. This requires a certain insight on how different the real system \Eqref{eq:PS.0.1} and prediction model \Eqref{eq:PS.0.3} may be. To account for this, the following assumption is supposed to hold,

\begin{ass}[Size of parametric uncertainty] \label{ass:PETMPC.0}
A set $\mathbb{W}_S\coloneqq\big\{w_S=(A-\tilde{A}(i))x+(B-\tilde{B}(i))u\mid(x,u)\in\mathbb{X}\times\mathbb{U},\quad\forall i\in\mathbb{N}_{0}\big\}$ is known.
\end{ass}

Consider the model structure of \Eqref{eq:PS.1}. Henceforth, the input will be divided into a regulator part, $\hat{u}$, and a persistently exciting part, $w$. The nominal model \Eqref{eq:PS.1} is rewritten as,
\begin{subequations} \label{eq:PETMPC.1}
\begin{alignat}{2}
x(i+1)&=\bar{A}x(i)+\bar{B}\left(\hat{u}(i)+w(i)\right) \label{eq:PETMPC.1.1}\\
x(i) \in \mathbb{X},&\quad\hat{u}(i) \in \hat{\mathbb{U}},\quad w(i) \in \mathbb{W},\quad&&\forall i \in \mathbb{N}_{0} \label{eq:PETMPC.1.4}\\
\bar{B}w(i)&=\hat{w}(i)\in \hat{\mathbb{W}}=\bar{B}\mathbb{W},\quad&&\forall i \in \mathbb{N}_{0} \label{eq:PETMPC.1.5}
\end{alignat}
\end{subequations}

\begin{ass}[Properties of the divided input constraint sets] \label{ass:PETMPC.1}
The sets $\mathbb{W}$ and $\hat{\mathbb{U}}$ are compact and contain the origin. Also $\hat{\mathbb{U}}\subseteq\mathbb{U} \ominus \mathbb{W}$ and is non-empty.
\end{ass}
Assumption \ref*{ass:PETMPC.1} implies $\hat{\mathbb{U}}\oplus\mathbb{W}\subseteq\mathbb{U}$. Note that $\hat{\mathbb{W}}$ is a linear mapping of $\mathbb{W}$ therefore it maintains compactness \cite{Sutherland1995}.

The tube MPC approach treats $\hat{w}$ as a bounded additive disturbance. The undisturbed model takes the form,
\begin{subequations} \label{eq:PETMPC.2}
\begin{alignat}{1}
z(i+1)&=\bar{A}z(i)+\bar{B}v(i) \label{eq:PETMPC.2.1}\\
z(i)\in \mathbb{Z}&=\mathbb{X}\ominus \mathbb{S},\:\:\: v(i) \in \mathbb{V}=\hat{\mathbb{U}}\ominus K_t\mathbb{S},\quad\forall i \in \mathbb{N}_{0} \label{eq:PETMPC.2.3}
\end{alignat}
\end{subequations}
in which $K_t$ is any stabilizing gain for $(\bar{A},\bar{B})$, guaranteed to exist in view of Assumption \ref*{ass:PS.1}. The set $\mathbb{S}$ is an RPI set for the dynamics of the error between the trajectories of the nominal and uncertain models,
\begin{subequations} \label{eq:PETMPC.3}
\begin{alignat}{1}
e(i)&=x(i)-z(i),\quad e(0)=0 \label{eq:PETMPC.3.1}\\
e(i+1)&=(\bar{A}+\bar{B}K_t)e(i)+\hat{w}(i)+w_S(i) \label{eq:PETMPC.3.2}\\
\hat{w}(i) &\in \hat{\mathbb{W}},\quad w_S(i) \in \mathbb{W}_S,\quad\forall i \in \mathbb{N}_{0} \label{eq:PETMPC.3.5}
\end{alignat}
\end{subequations}
Note that the undisturbed system evolves independently after $i=0$. For a nominal solution to exist, the constraint space defined by constraints \Eqref{eq:PETMPC.2.3} must be non-empty,

\begin{ass}[Allowable disturbance size for constraint satisfaction] \label{ass:PETMPC.2}
The set $\mathbb{S}$ is such that $\mathbb{S}\subset\mathbb{X}$ and  $K_t\mathbb{S}\subset\hat{\mathbb{U}}$.
\end{ass}

Assumption \ref*{ass:PETMPC.2} is not uncommon in robust control implementations, it simply states the fact that it will not be possible to satisfy the constraints if the disturbances are \emph{too} large. After a standard MPC problem (section \ref*{sec:PS.MPC}) is solved for the nominal system \Eqref{eq:PETMPC.2.1} under tightened constraints \Eqref{eq:PETMPC.2.3}, the input to the uncertain system is computed from the following control policy,
\begin{equation} \label{eq:PETMPC.3.3}
\hat{u}(i)=v(i)+K_t\left(x(i)-z(i)\right)
\end{equation}

\subsection{Additional PE constraint} \label{sec:PETMPC.PE}
Since $w$ is bounded \Eqref{eq:PETMPC.1.4}, the upper bound in Theorem \ref*{theo:PS.1} is trivially fulfilled \cite{Shouche1998}, therefore the focus is placed on achieving the lower bound. At time $i$ define,
\begin{subequations} \label{eq:PETMPC.4}
\begin{alignat}{1}
\boldsymbol{M}\left(w(i)\right)=\boldsymbol{M}_i&=\sum_{j=0}^{lp-1}\left(\boldsymbol{w}_{i-j}\boldsymbol{w}^{\top}_{i-j}\right)-\rho_{0}I_{mN_p} \label{eq:PETMPC.4.1}\\
\boldsymbol{w}_{i-j}&=\left[\begin{array}{c}
w(i-j)\\
w(i-j-1)\\
\vdots\\
w(i-j-N_p+1)
\end{array} \right] \label{eq:PETMPC.4.2}
\end{alignat}
\end{subequations}
That $\boldsymbol{M}_i$ depends only on the past and current exciting input makes the following a suitable \emph{persistence of excitation} constraint within the receding horizon fashion of MPC
\begin{equation} \label{eq:PETMPC.5}
\boldsymbol{M}_i>0
\end{equation}

\begin{rem}[] \label{rem:PETMPC.1}
The realization of constraint \eqref{eq:PETMPC.5}, i.e. a PE behaviour of $w$, does not necessarily imply persistence of excitation in the absolute input $u$. This is due to the disturbance rejection control policy \eqref{eq:PETMPC.3.3}. Numerical simulations have shown that a proper selection of the linear gain $K_t$ secures transmission of the PE condition towards $u$. The investigation of conditions for gain selection is beyond the scope of this paper, hence we require the following assumption.
\end{rem}

\begin{ass}[Persistence of excitation transmission] \label{ass:PETMPC.PE.1}
For the linear gain $K_t$, the persistence of excitation of $w$ is transmitted to the absolute input sequence $u$.
\end{ass}

\subsection{Tube MPC with additional PE constraint} \label{sec:PETMPC.R}
The optimization problem of the proposed model predictive controller, at time $i$ is, $\mathbb{P}_N\big(z=z(i)\big)$:
\begin{equation} \label{eq:PETMPC.6}
\min_{v,w}\smashoperator[r]{\sum_{k=0}^{N-1}}\left(z^{\top}(k)Qz(k)+d^{\top}(k)\mathcal{R}d(k)\right) + V_{f}\big(z(N)\big)
\end{equation}
subject to:
\begin{subequations} \label{eq:PETMPC.7}
\begin{alignat}{2}
z(0)&=z,\:\:z(k+1)=\bar{A}z(k&&)+\bar{B}v(k) \label{eq:PETMPC.7.2}\\
z(k) &\in \mathbb{Z},\:\:v(k)\in \mathbb{V},\:\: && k=0,\ldots,N-1 \label{eq:PETMPC.7.4}\\
z(N) &\in \mathbb{Z}_{f}\subseteq \mathbb{Z} \label{eq:PETMPC.7.5}\\
\tilde{w}(0) &\in \mathbb{W} \label{eq:PETMPC.7.6.1}\\
\tilde{w}(k) &=0_m,\: && k=1,\ldots,N-1 \label{eq:PETMPC.7.6.2}\\
\boldsymbol{M}&\left(w(i)=\tilde{w}(0)\right)>0 \label{eq:PETMPC.7.7}
\end{alignat}
\end{subequations}
where $\mathbb{Z}_f$ is the nominal equivalent $\mathbb{X}_f$. The cost term $d^{\top}(k)\mathcal{R}d(k)$ is a straight forward augmentation of the usual input cost with,
\begin{subequations} \label{eq:PETMPC.8}
\begin{align}
d^{\top}(k)&=[v^{\top}(k)\;\tilde{w}^{\top}(k)],\:\:k=0,1,\ldots,N-1 \label{eq:PETMPC.7.0.1}\\
\mathcal{R}&=\left[\begin{array}{cc}
R & \boldsymbol{0}_{m\times m}\\
\boldsymbol{0}_{m\times m} & R
\end{array} \right] \label{eq:PETMPC.7.0.2}\\
\bar{A}&=\tilde{A}(i),\quad\bar{B}=\tilde{B}(i) \label{eq:PETMPC.7.0.3}
\end{align}
\end{subequations}

\section{Stability and feasibility}  \label{sec:SF}
In this section stability and recursive feasibility proofs are derived for the PE-Tube MPC.

\subsection{Stability} \label{sec:SF.S}
The objective of performing closed-loop system identification is to reduce model uncertainty and thereby improve the performance of the MPC. Therefore it is safe to assume that the initially known model \Eqref{eq:PS.0.1} will differ from the actual plant being controlled, and from any transitional model \Eqref{eq:PS.0.3} given by the recursive identification \Eqref{eq:SF.5}. This represents a considerable drawback, as one of the main requirements of tube MPC is the computation of an RPI set, which is model dependent. Different approaches can be used to compute such sets; in \cite{Ferramosca2013a} for example, Assumption \ref*{ass:PETMPC.0} is bypassed by showing that, under a type of parametric affine model uncertainty, an RPI set computed for a certain model is also RPI for a family of models. The definition of a specific set of rules for computing $\mathbb{S}$ is out of the scope of this paper; for the examples shown in Section \ref*{sec:SR.1} a suitable RPI set is computed on the basis of Assumption \ref*{ass:PETMPC.0} and the error dynamics \Eqref{eq:PETMPC.3}.

\begin{theo}[Stability of the PE-Tube MPC] \label{theo:SF.1}
If assumptions \ref*{ass:PS.1}--\ref*{ass:PETMPC.2} hold, then the set $\mathcal{A}\coloneqq\mathbb{S}\times\{\boldsymbol{0}_n\}$ is asymptotically stable  with a region of attraction $\big(\mathcal{Z}_N\oplus\mathbb{S}\big)\times\mathcal{Z}_N$ for the constrained composite system,
\begin{equation*} 
\begin{aligned}
x(i+1)&=\bar{A}x(i)+\bar{B}\hat{u}(i)+\hat{w}(i)+w_S(i) \\
z(i+1)&=\bar{A}z(i)+\bar{B}v(i) \label{eq:SF.6.1}
\end{aligned}
\end{equation*}
under the closed-loop control laws defined by \Eqref{eq:PETMPC.3.3} and \Eqref{eq:PETMPC.6}--\Eqref{eq:PETMPC.7} respectively ($\mathcal{Z}_N$ is the nominal equivalent of $\mathcal{X}_N$).
\end{theo}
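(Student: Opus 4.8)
The plan is to establish asymptotic stability of the set $\mathcal{A}=\mathbb{S}\times\{\boldsymbol{0}_n\}$ by exploiting the cascade structure of the composite dynamics: the nominal subsystem $z(i+1)=\bar{A}z(i)+\bar{B}v(i)$ evolves autonomously (it is not driven by $x$), while the real state $x$ enters only through the error $e=x-z$. I would first treat the nominal subsystem on its own, using the optimal value function $V_N^{*}(z)$ of the optimization \Eqref{eq:PETMPC.6}--\Eqref{eq:PETMPC.7} as a Lyapunov function to drive $z(i)\to\boldsymbol{0}_n$. Then I would handle the error dynamics \Eqref{eq:PETMPC.3.2}, showing that $e(i)$ enters and remains in the RPI set $\mathbb{S}$, which forces $x(i)=z(i)+e(i)$ into $\mathbb{S}$ as $z\to\boldsymbol{0}_n$, i.e.\ $(x,z)\to\mathcal{A}$.

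For the nominal part, the main subtlety is that the cost \Eqref{eq:PETMPC.6} is augmented with the persistently exciting term $\tilde{w}^{\top}(0)R\tilde{w}(0)$ and the non-convex PE constraint \Eqref{eq:PETMPC.7.7}, so the standard descent argument needs care. The key step is constructing a feasible candidate at time $i+1$ from the optimal solution at time $i$. I would use the usual shifted sequence for $(z,v)$: drop the first element and append the terminal control action guaranteed by Assumption \ref*{ass:PS.2}, which keeps $z(N)$ in the control-invariant set $\mathbb{Z}_f$ and is feasible by the tightened-constraint construction \Eqref{eq:PETMPC.2.3}. For the excitation variable, recall from \Eqref{eq:PETMPC.7.6.2} that only $\tilde{w}(0)$ is a genuine decision variable (the remaining $\tilde{w}(k)$ are fixed to zero), so the PE constraint \Eqref{eq:PETMPC.7.7} acts on $\boldsymbol{M}_i$, which by \Eqref{eq:PETMPC.4.1} depends only on \emph{past and current} applied excitation. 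Because the realised $w(i)$ values are recorded, feasibility of \Eqref{eq:PETMPC.7.7} at the next step is a property of the closed-loop history rather than of the candidate sequence, so it does not interfere with the shifted-sequence argument. The standard telescoping then yields $V_N^{*}\big(z(i+1)\big)-V_N^{*}\big(z(i)\big)\leq -\ell\big(z(i),v(i)\big)\leq -z^{\top}(i)Qz(i)$, and with $Q\succeq0$, $R\succ0$ one obtains asymptotic convergence of $z$ to the origin, plus Lyapunov stability from continuity and positive-definiteness of $V_N^{*}$ near $z=\boldsymbol{0}_n$.

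For the error part, I would invoke that $\mathbb{S}$ is RPI for \Eqref{eq:PETMPC.3.2} under the disturbances $\hat{w}(i)+w_S(i)\in\hat{\mathbb{W}}\oplus\mathbb{W}_S$ with gain $K_t$: since $e(0)=0\in\mathbb{S}$, invariance gives $e(i)\in\mathbb{S}$ for all $i$, and the contraction of $(\bar{A}+\bar{B}K_t)$ (stabilizing by Assumption \ref*{ass:PS.1}) keeps $e$ bounded in $\mathbb{S}$. Combining the two pieces, $z(i)\to\boldsymbol{0}_n$ and $e(i)\in\mathbb{S}$ give $x(i)=z(i)+e(i)\to\mathbb{S}$, so the composite state converges to $\mathcal{A}=\mathbb{S}\times\{\boldsymbol{0}_n\}$; the region of attraction is exactly the set of initial conditions for which the nominal problem is feasible, namely $z(0)\in\mathcal{Z}_N$ together with $x(0)\in\mathcal{Z}_N\oplus\mathbb{S}$ (consistent with $e(0)\in\mathbb{S}$). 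The main obstacle I anticipate is \emph{rigorously decoupling the PE constraint from the descent argument}: one must argue that enforcing \Eqref{eq:PETMPC.7.7} never obstructs the existence of the feasible shifted candidate used for the Lyapunov decrease. This hinges on the fact that $\boldsymbol{M}_i$ is determined by already-applied inputs and on Assumption \ref*{ass:PETMPC.1}, which guarantees $\hat{\mathbb{U}}\oplus\mathbb{W}\subseteq\mathbb{U}$ so that the excited input remains admissible; making this independence precise, rather than merely plausible, is the delicate part of the proof.
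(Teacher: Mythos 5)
Your overall route is the same one the paper relies on: it disposes of this theorem with a one-line citation to the standard tube MPC stability results, and what you have written out is precisely that standard argument (nominal value function as Lyapunov function for $z$, RPI invariance of $\mathbb{S}$ for the error $e=x-z$, and the cascade structure giving convergence of $(x,z)$ to $\mathbb{S}\times\{\boldsymbol{0}_n\}$ over the region $(\mathcal{Z}_N\oplus\mathbb{S})\times\mathcal{Z}_N$). The error-dynamics half and the identification of the region of attraction are fine.

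There is, however, one step that does not go through as you state it. You take the \emph{augmented} optimal cost $V_N^{*}(z)$ of \eqref{eq:PETMPC.6}--\eqref{eq:PETMPC.7} as the Lyapunov function and claim the usual telescoping gives $V_N^{*}(z(i+1))-V_N^{*}(z(i))\leq-\ell(z(i),v(i))$. But the stage cost contains the term $\tilde{w}^{\top}(0)R\tilde{w}(0)$, and the PE constraint \eqref{eq:PETMPC.7.7} forces $\tilde{w}(0)$ to be bounded away from zero at \emph{every} time step; the shifted candidate at time $i+1$ must therefore include a fresh positive excitation cost (e.g.\ $w^{\top}(i+1-l_p)Rw(i+1-l_p)$ for the trivial periodic candidate) that is not dominated by the term $w^{\top}(i)Rw(i)$ you discard, so the claimed inequality fails in general and $V_N^{*}$ need not even be monotone along the closed loop. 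Your remark that $\boldsymbol{M}_i$ depends only on past data settles \emph{feasibility} of the shifted candidate, not the \emph{cost} comparison. The clean repair is a decoupling observation you never quite make explicit: $\tilde{w}$ does not appear in the nominal dynamics \eqref{eq:PETMPC.7.2}, the constraints \eqref{eq:PETMPC.7.6.1}--\eqref{eq:PETMPC.7.7} involve neither $z$ nor $v$, and $\mathcal{R}$ in \eqref{eq:PETMPC.7.0.2} is block diagonal, so the optimization separates into the standard tube MPC problem in $(z,v)$ plus an independent, uniformly bounded minimization over $\tilde{w}(0)$. Running the Lyapunov argument on the $(z,v)$-part of the value function alone then gives the descent and $z(i)\to\boldsymbol{0}_n$, and the rest of your argument stands; this decoupling is also exactly what licenses the paper's appeal to the off-the-shelf tube MPC proofs.
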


\begin{proof}
Follows directly from the stability proofs in \cite{Rawlings2014}.
\end{proof}

\subsection{Recursive feasibility} \label{sec:SF.F}
Recursive feasibility of a standard tube MPC formulation is provided by initializing inside $\mathcal{Z}_{N}\oplus\mathbb{S}$. In \cite{Marafioti2013} an additional assumption is used to provide a proof of recursive feasibility under the effects of the non-convex PE constraint \Eqref{eq:PETMPC.7.7},

\begin{ass}[] \label{ass:SF.2}
A feasible solution is available at time $i-1$, i.e., $\boldsymbol{M}_{i-1}>0$
\end{ass}

\begin{theo}[Recursive feasibility: trivial solution] \label{theo:SF.2}
If Assumption \ref*{ass:SF.2} holds, then there exists a feasible solution at time $i$ for the persistently exciting tube MPC \Eqref{eq:PETMPC.6}--\Eqref{eq:PETMPC.7}.
\end{theo}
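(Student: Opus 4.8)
The plan is to exhibit at time $i$ an explicit (\emph{trivial}) pair $(v,\tilde w)$ satisfying every constraint in \Eqref{eq:PETMPC.7}, thereby establishing feasibility without any reference to optimality. I would exploit the fact that the constraints decouple: the nominal dynamics \Eqref{eq:PETMPC.7.2}, the tightened admissibility constraints \Eqref{eq:PETMPC.7.4}, and the terminal constraint \Eqref{eq:PETMPC.7.5} involve only $(v,z)$, whereas \Eqref{eq:PETMPC.7.6.1}--\Eqref{eq:PETMPC.7.7} involve only $\tilde w$. I would therefore construct the two components separately.

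For the nominal part I would use the standard tube-MPC shift. Writing $(v^{*},z^{*})$ for the solution known to be feasible at time $i-1$, define $v(k)=v^{*}(k+1)$ for $k=0,\dots,N-2$ and take $v(N-1)$ to be an admissible terminal input furnished by the control invariance of $\mathbb{Z}_f$; the candidate state trajectory is the tail $z^{*}(1),\dots,z^{*}(N)$ extended by one terminal step. Admissibility of \Eqref{eq:PETMPC.7.2}--\Eqref{eq:PETMPC.7.5} then follows exactly as in \cite{Rawlings2014}: the shifted entries inherit \Eqref{eq:PETMPC.7.4} from feasibility at $i-1$, while control invariance of $\mathbb{Z}_f\subseteq\mathbb{Z}$ keeps the appended step inside $\mathbb{Z}_f$ with $v(N-1)\in\mathbb{V}$.

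The substantive step is the persistence-of-excitation constraint. Setting $\tilde w(k)=0$ for $k\ge 1$ satisfies \Eqref{eq:PETMPC.7.6.2} automatically, so it remains to produce a single vector $\tilde w(0)=w(i)\in\mathbb{W}$ with $\boldsymbol{M}_i>0$. The key observation is the telescoping identity obtained from \Eqref{eq:PETMPC.4.1}, namely $\boldsymbol{M}_i=\boldsymbol{M}_{i-1}+\boldsymbol{w}_i\boldsymbol{w}_i^{\top}-\boldsymbol{w}_{i-l_p}\boldsymbol{w}_{i-l_p}^{\top}$, in which only the leading $m$ entries of $\boldsymbol{w}_i$ are free (the remaining blocks, and the entire dropped window $\boldsymbol{w}_{i-l_p}$, are fixed by past, already-applied inputs). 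Since $\boldsymbol{M}_{i-1}>0$ by Assumption~\ref*{ass:SF.2}, it would suffice to choose $w(i)$ so that the fresh rank-one term dominates the dropped one, i.e. so that $\boldsymbol{w}_i\boldsymbol{w}_i^{\top}\succeq\boldsymbol{w}_{i-l_p}\boldsymbol{w}_{i-l_p}^{\top}$, which immediately gives $\boldsymbol{M}_i\succeq\boldsymbol{M}_{i-1}>0$.

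I expect this compensation to be the main obstacle, and the point at which Assumption~\ref*{ass:SF.2} alone is delicate. Because $w(i)$ enters only the top block of the stacked window $\boldsymbol{w}_i$, the dominance $\boldsymbol{w}_i\boldsymbol{w}_i^{\top}\succeq\boldsymbol{w}_{i-l_p}\boldsymbol{w}_{i-l_p}^{\top}$ cannot be forced for an arbitrary fixed history, so I would instead argue only that the perturbed matrix stays positive definite: by eigenvalue interlacing, subtracting the rank-one block leaves $\boldsymbol{M}_{i-1}-\boldsymbol{w}_{i-l_p}\boldsymbol{w}_{i-l_p}^{\top}$ with at most one nonpositive eigenvalue, and I would then show that $w(i)\in\mathbb{W}$ can be selected to place enough energy along the corresponding eigendirection to recover $\boldsymbol{M}_i>0$. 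This is precisely the gap that the tightening of the PE constraint announced in Section~\ref*{sec:Intro} is meant to close, and I would flag explicitly that the clean ``trivial'' choice is guaranteed to work whenever $\mathbb{W}$ is large enough for the required compensating direction to be realisable.
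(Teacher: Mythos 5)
Your handling of the nominal part (shifted tail plus a terminal step from the control invariance of $\mathbb{Z}_f$) matches the standard argument and is fine, but your treatment of the PE constraint has a genuine gap: you never actually exhibit a feasible $w(i)$. The paper's proof is a one-line construction that you miss entirely. Starting from the same telescoping identity $\boldsymbol{M}_i=\boldsymbol{M}_{i-1}+\boldsymbol{w}_i\boldsymbol{w}_i^{\top}-\boldsymbol{w}_{i-l_p}\boldsymbol{w}_{i-l_p}^{\top}$, the \emph{trivial} solution of the theorem's title is the periodic repetition $w(i)=w(i-l_p)$: then $\boldsymbol{w}_i=\boldsymbol{w}_{i-l_p}$, the rank-one update cancels the dropped term exactly, and $\boldsymbol{M}_i=\boldsymbol{M}_{i-1}>0$ by Assumption \ref*{ass:SF.2}. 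No dominance condition, no eigenvalue interlacing, and no largeness condition on $\mathbb{W}$ is needed; indeed $w(i-l_p)\in\mathbb{W}$ because it was applied at an earlier time, so \Eqref{eq:PETMPC.7.6.1} holds automatically. Your interlacing route, by contrast, ends with the unproven claim that some $w(i)\in\mathbb{W}$ can ``place enough energy along the corresponding eigendirection,'' which is false for a small enough $\mathbb{W}$ and is not an assumption of the theorem — so as written your argument does not establish the result.

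To your credit, the obstruction you identify — that only the leading $m$ entries of $\boldsymbol{w}_i$ are decision variables while the rest of the window is fixed history — is real, and it is exactly why the exact cancellation $\boldsymbol{w}_i=\boldsymbol{w}_{i-l_p}$ requires the \emph{past} inputs to already satisfy $w(i-k)=w(i-l_p-k)$ for $k=1,\dots,N_p-1$. The paper resolves this not inside Theorem \ref*{theo:SF.2} (whose guarantee is only for the periodic repetition of a previous solution) but by tightening the constraints to \Eqref{eq:SF.4} in Theorem \ref*{theo:SF.3}, which forces the optimizer to commit to a window-consistent sequence. So you correctly sensed where the difficulty lies, but you placed the fix in the wrong theorem and replaced the intended elementary cancellation with an argument that does not close.
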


\begin{proof}
The proof given in \cite{Marafioti2013} is repeated here for clarifying purposes. From \Eqref{eq:PETMPC.4},
\begin{subequations} 
\begin{alignat*}{1}
\boldsymbol{M}_i&=\boldsymbol{M}_{i-1}+\boldsymbol{w}_i\boldsymbol{w}^{\top}_i-\boldsymbol{w}_{i-l_p}\boldsymbol{w}^{\top}_{i-l_p}\\
\boldsymbol{w}^i&=\boldsymbol{w}_i\boldsymbol{w}^{\top}_i-\boldsymbol{w}_{i-l_p}\boldsymbol{w}^{\top}_{i-l_p}
\end{alignat*}
\end{subequations}
Clearly then, $\boldsymbol{w}^i \ge 0\implies\boldsymbol{M}_i > 0$. The proof is completed by noticing that $w(i)=w(i-l_p) \implies \boldsymbol{w}^i=0$.
\end{proof}

Theorem \ref*{theo:SF.2} provides recursive feasibility under the trivial periodic repetition of a previous solution, but it does not analyze the effect of choosing a different one. In fact, $w(i)$ is a decision variable in the proposed optimization problem \Eqref{eq:PETMPC.6}, hence the optimizer is \emph{free} to choose $w(i)\neq w(i-l_p)$ as long as constraints \Eqref{eq:PETMPC.7.6.1} and \Eqref{eq:PETMPC.7.7} are not violated. Numerical simulations (conducted on the same system used as an example in \cite{Marafioti2013}) show that recursive feasibility may be lost if periodicity is broken, namely, if the optimization algorithm lands in a solution such that $w(i)\neq w(i-l_p)$ . The observed behaviour can be summarized in,
\begin{equation*} 
\begin{aligned}
&\exists w(i)\in\mathbb{W}\;\;\mbox{s.t.},\;\;w(i)\neq w(i-l_p)\\
&\wedge\;\;\boldsymbol{M}_i>0,\;\;\mbox{but}\;\;\boldsymbol{M}_{i+1}\leq 0\;\;\forall w(i+1)\in\mathbb{W}
\end{aligned}
\end{equation*}
According to \Eqref{eq:PETMPC.4}, the non-trivial optimized $w(i)$ remains in $\boldsymbol{w}_i$ for $N_p-1$ time steps. To take this into account the following constraints are proposed to replace \Eqref{eq:PETMPC.7.6.2}--\Eqref{eq:PETMPC.7.7},
\begin{subequations} \label{eq:SF.4}
\begin{align}
\tilde{w}(k)=w(i+k-l_p),\:\:k&=1,\ldots,N_p-1 \label{eq:SF.4.2}\\
\tilde{w}(k)=0_m,\:\:k&=N_p,\ldots,N-1 \label{eq:SF.4.3}\\
\boldsymbol{M}\left(w(i+k)=\tilde{w}(k)\right)>0,\:\:k&=0,1,\ldots,N_p-1 \label{eq:SF.4.1}
\end{align}
\end{subequations}

\begin{theo}[Recursive feasibility: non-trivial solution] \label{theo:SF.3}
If Assumption \ref*{ass:SF.2} holds, and the constraint \Eqref{eq:PETMPC.7.7} is replaced by \Eqref{eq:SF.4}, then there exists a feasible, not necessarily trivial, solution at time $i$ for the PE-tube MPC \Eqref{eq:PETMPC.6}--\Eqref{eq:PETMPC.7}.
\end{theo}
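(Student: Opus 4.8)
The plan is to rerun the standard receding-horizon feasibility argument while exploiting that the look-ahead checks in \Eqref{eq:SF.4.1}, imposed at time $i-1$, already pre-certify all but one of the persistence-of-excitation checks needed at time $i$. I would exhibit an explicit candidate at time $i$ built by shifting the feasible solution granted by Assumption \ref*{ass:SF.2} one step forward, and then verify \Eqref{eq:PETMPC.7.2}--\Eqref{eq:PETMPC.7.5} together with the replacement block \Eqref{eq:SF.4}.

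First I would separate the regulator variables $(z,v)$ from the excitation $\tilde{w}$; because the nominal dynamics \Eqref{eq:PETMPC.7.2} contain no $\tilde{w}$, the two couple only through the cost and not through the constraints. For $(z,v)$ the candidate is the usual tube shift---keep $v(1),\dots,v(N-1)$ from time $i-1$ and append the terminal action admissible by control-invariance of $\mathbb{Z}_f$ (Assumption \ref*{ass:PS.2})---which settles \Eqref{eq:PETMPC.7.4}--\Eqref{eq:PETMPC.7.5} exactly as in \cite{Rawlings2014}. For the excitation I would take the periodic continuation $w(i)=w(i-l_p)$, with $\tilde{w}(k)=w(i+k-l_p)$ for $k=1,\dots,N_p-1$ and $\tilde{w}(k)=0_m$ for $k=N_p,\dots,N-1$; this satisfies \Eqref{eq:PETMPC.7.6.1} (since $w(i-l_p)\in\mathbb{W}$), \Eqref{eq:SF.4.2} and \Eqref{eq:SF.4.3} by construction.

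The substance is \Eqref{eq:SF.4.1}. For $k=0,\dots,N_p-2$ I would argue that $\boldsymbol{M}_{i+k}$ is evaluated on exactly the same window of excitation values as the index-$(k{+}1)$ check already verified at time $i-1$: the shifted decision $w(i)=w(i-l_p)$ coincides with the value predicted one step earlier as $\tilde{w}(1)$, the periodic entries agree, and the freshly committed $w(i-1)$ equals the $\tilde{w}(0)$ applied at $i-1$; hence these $\boldsymbol{M}_{i+k}$ inherit positivity from the feasibility assumed at $i-1$. The single genuinely new condition is $k=N_p-1$, and this is where I expect the main obstacle. I would settle it with the one-step recursion from the proof of Theorem \ref*{theo:SF.2}, namely $\boldsymbol{M}_{i+N_p-1}=\boldsymbol{M}_{i+N_p-2}+\boldsymbol{w}_{i+N_p-1}\boldsymbol{w}_{i+N_p-1}^{\top}-\boldsymbol{w}_{i+N_p-1-l_p}\boldsymbol{w}_{i+N_p-1-l_p}^{\top}$, observing that the leading window $\boldsymbol{w}_{i+N_p-1}=\left[w(i+N_p-1),\dots,w(i)\right]^{\top}$ lies wholly in the range $m=0,\dots,N_p-1$ on which $w(i+m)=w(i+m-l_p)$; therefore $\boldsymbol{w}_{i+N_p-1}=\boldsymbol{w}_{i+N_p-1-l_p}$, the two rank-one terms cancel, and $\boldsymbol{M}_{i+N_p-1}=\boldsymbol{M}_{i+N_p-2}>0$---the latter being the index-$(N_p{-}1)$ check certified at $i-1$.

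The step most at risk of slips is the index bookkeeping: I must confirm that after the unit time shift every overlapping window at times $i,\dots,i+N_p-2$ contains identical entries---including the possibly non-trivial committed $w(i-1)$ and the history below index $i$---and that the leading window at $i+N_p-1$ is genuinely periodic throughout. Once that alignment is made explicit the recursion closes \Eqref{eq:SF.4.1} for every $k$, the candidate is feasible, and, because the optimizer still has the freedom to select any admissible $w(i)$ subject to \Eqref{eq:SF.4}, the guaranteed solution need not be the trivial one, as the statement claims.
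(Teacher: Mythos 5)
Your proposal is correct and follows exactly the route the paper intends: the paper's own proof is a single sentence stating that the result follows by ``extending the proof for Theorem~\ref*{theo:SF.2} to cover $N_p-1$ time steps,'' and your argument---shifting the time-$(i{-}1)$ solution, inheriting the first $N_p-1$ positivity checks from the look-ahead constraints certified at $i-1$, and closing the one new check at $k=N_p-1$ via the rank-one cancellation $\boldsymbol{w}_{i+N_p-1}=\boldsymbol{w}_{i+N_p-1-l_p}$ from the periodic continuation---is precisely that extension, carried out with the index bookkeeping the paper omits. No discrepancy in approach; yours is simply the fully written-out version.
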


\begin{proof}
This result is established by extending the proof for Theorem \ref*{theo:SF.2} to cover $N_p-1$ time steps.
\end{proof}

\begin{rem}[] \label{rem:PETMPC.2}
Each time instant the optimization updates the last element of the initially supplied PE sequence. Therefore, periodicity arises naturally in the optimized exciting sequence.
\end{rem}

\section{Simulation results} \label{sec:SR.1}
This section shows the behaviour of the proposed algorithm through two numerical examples. The task is to perform closed-loop system identification while regulating the states of the following multi-variable system (taken from \cite{Ferramosca2013a}):
\begingroup\setlength{\arraycolsep}{1pt}
\begin{subequations} \label{eq:SR.1}
\begin{alignat}{1}
A(\delta)=\bar{A}+\delta \hat{A}=&\begin{bmatrix*}[r]
0.42 &\:-0.28\\
0.02 &\:0.60
\end{bmatrix*}
+\delta
\begin{bmatrix*}[r]
-0.6 &\:-0.4\\
-0.6 &\:-0.85
\end{bmatrix*} \label{eq:SR.1.1}\\
B(\delta)=\bar{B}+\delta \hat{B}=&\begin{bmatrix*}[r]
0.30\\
-0.40
\end{bmatrix*}
+\delta
\begin{bmatrix*}[r]
-0.2\\
0.4
\end{bmatrix*} \label{eq:SR.1.2}
\end{alignat}
\end{subequations}
\endgroup
subject to the following constraints,
\begin{subequations} \label{eq:SR.2}
\begin{align}
\mathbb{X}&=\{x\in\mathbb{R}^2\:\text{s.t},\:|x_j|\leq 17,\:\:j=1,2\} \label{eq:SR.2.1}\\
\mathbb{U}&=\{u\in\mathbb{R}\:\text{s.t},\:|u|\leq 4\} \label{eq:SR.2.2}\\
\mathbb{W}&=\{w\in\mathbb{R}\:\text{s.t},\:|w|\leq 0.2\} \label{eq:SR.2.3}\\
|\delta|&\leq 0.15 \label{eq:SR.2.4}
\end{align}
\end{subequations}

Since no particular performance requirements are being considered, the controller parameters are loosely set to $N=3$, $Q=I_{2\times2}$ and $R=1$. The terminal cost $V_{f}$, and the terminal constraint set $\mathbb{Z}_f$ are computed according to Assumption \ref*{ass:PS.2}. $A(\delta)$ is inherently stable for any $\delta$ following \Eqref{eq:SR.2.4}, this provides flexibility in choosing the linear gain $K_t=[-0.112\;\;0.354]$ which is stabilizing and complies with Assumption \ref*{ass:PETMPC.PE.1}. A set $\mathbb{W}_S$ following Assumption \ref*{ass:PETMPC.2} is defined for \Eqref{eq:SR.1}--\Eqref{eq:SR.2} and the corresponding RPI set $\mathbb{S}$ is computed.

Following the directions given in \cite{Goodwin1984}, the PE constraint parameters are set to $N_p=6$ and $l_p=11$; given the size of $\mathbb{W}$ a value of $\rho_0=0.05$ is employed. To guarantee recursive feasibility, a feasible PE sequence of length $N_p+l_p-1$ has to be provided (Assumption \ref*{ass:SF.2}). However, this sequence only acts as a buffer for feasibility purposes, and the actual value feed to the system is an optimization variable.

For the RLS algorithm a forgetting factor $\lambda=0.97$ is employed. The estimates vector $\tilde{\theta}_{j}$ is initialized at the known values $(\bar{A},\bar{B})$. The information matrix $R_{\text{ID}}$ is initialized as the null matrix, therefore a \emph{pseudo}-inverse is computed for \Eqref{eq:PS.5.1} until $R_{\text{ID}}$ becomes invertible. Albeit a recursion is computed at every time instant, the prediction model update \Eqref{eq:PETMPC.7.0.3} is performed only every 3 time steps.

\subsection{Closed-loop identification capabilities} \label{sec:SR.1.1}
An initial state $x(0)=\left[0\;0\right]^{\top}\in\mathcal{Z}_N$ is considered to assess the closed-loop identification capabilities of the proposed algorithm. This is done to avoid the additional information that would be generated in the process of regulation, so that the effect of the PE constraint can be observed independently. Fig. \ref*{fig:SR.3} shows the optimized input signal generated by the PE-Tube MPC for both, nominal system ($v$) and true plant ($u$). As expected, given the initial state, the nominal input remains at the origin while the input for the uncertain system is, indeed, disturbed by the PE part $w$. Fig. \ref*{fig:SR.2} shows the trajectory of the states $x_1$ and $z_1$. During the initial time steps, the algorithm optimizes a PE sequence on the basis of the feasible sequence initially supplied, hence the transient behaviour observed on the uncertain system ($x$). Feasibility is maintained during this period thanks to the tighter constraint \Eqref{eq:SF.4}. An optimized periodic solution is attained fairly fast.

\begin{figure}[!t]
\centering
\pgfplotsset{
every axis legend/.append style={
at={(0.95,0.94)},
anchor=north east,
}}
\setlength\figureheight{0.15\textwidth}
\setlength\figurewidth{0.4\textwidth}
\begin{tikzpicture}

\begin{axis}[%
width=0.95092\figurewidth,
height=\figureheight,
at={(0\figurewidth,0\figureheight)},
scale only axis,
xmin=0,
xmax=100,
xlabel={Time instant $i$},
ymin=-0.3,
ymax=0.5,
ylabel={Input signal: $u$ and $v$},
legend style={legend plot pos=left},
xtick={0,20,...,100},
ytick={-0.3,-0.2,-0.1,0,0.1,0.2,0.3,0.4,0.5},
minor x tick num={1},
minor y tick num={1},
]
\addplot [color=blue,solid]
  table[row sep=crcr]{%
0	0\\
1	0.0290368727132807\\
2	0.165297896447218\\
3	-0.234104158551057\\
4	0.0212040549054089\\
5	0.0472044941043568\\
6	-0.0204229655733163\\
7	-0.0471883710551028\\
8	0.0313216301256673\\
9	0.17760381506334\\
10	-0.000464898161530412\\
11	-0.0249512218679178\\
12	0.0173098733886392\\
13	0.143057971160174\\
14	-0.23180577717915\\
15	0.0231297760280962\\
16	0.0484195963182072\\
17	-0.0196386838846428\\
18	0.00334676763553338\\
19	0.0214990242032553\\
20	0.17180868166137\\
21	-0.00259580693833744\\
22	-0.0256814412045859\\
23	0.0171878968576737\\
24	0.114517844885253\\
25	-0.225951615135102\\
26	0.0261631631578641\\
27	0.0497322536446976\\
28	-0.0191213376562695\\
29	0.00346088292133868\\
30	0.0214606350006435\\
31	0.171734249338206\\
32	-0.00268236731104337\\
33	-0.025722792601029\\
34	0.0171626688510444\\
35	0.114504223643203\\
36	-0.22595811370938\\
37	0.026160548019601\\
38	0.0497272006023915\\
39	-0.0191204544679188\\
40	0.00346157095511306\\
41	0.0214610812420785\\
42	0.171734503814466\\
43	-0.0026822384547688\\
44	-0.0257227360665299\\
45	0.0171626884441284\\
46	0.114504226835029\\
47	-0.225958116341744\\
48	0.0261605443157399\\
49	0.0497228874645483\\
50	-0.0191195957354257\\
51	0.00346200950486922\\
52	0.0214612759961676\\
53	0.171734572957614\\
54	-0.00268222575154672\\
55	-0.0257227438513955\\
56	0.0171626765092632\\
57	0.114504216805223\\
58	-0.225958123136245\\
59	0.0261605403011924\\
60	0.0497185757923553\\
61	-0.0191187360387941\\
62	0.00346244882570819\\
63	0.0214614712581563\\
64	0.171734642394067\\
65	-0.00268221289792728\\
66	-0.0257227515691076\\
67	0.0171626645984907\\
68	0.114504206780062\\
69	-0.225958129933231\\
70	0.0261605362826605\\
71	0.0497142646428518\\
72	-0.0191178764495222\\
73	0.00346288809151366\\
74	0.0214616664955848\\
75	0.171734711821719\\
76	-0.0026822000459935\\
77	-0.0257227592858993\\
78	0.0171626526891859\\
79	0.114504196756145\\
80	-0.225958136729367\\
81	0.0261605322646334\\
82	0.0497099540196213\\
83	-0.0191170169651757\\
84	0.00346332730370185\\
85	0.0214618617091836\\
86	0.171734781240899\\
87	-0.00268218719562612\\
88	-0.0257227670017467\\
89	0.0171626407813373\\
90	0.114504186733455\\
91	-0.225958143524671\\
92	0.0261605282470993\\
93	0.0497056439225932\\
94	-0.0191161575857423\\
95	0.0034637664622796\\
96	0.021462056898957\\
97	0.171734850651609\\
98	-0.00268217434682313\\
99	-0.0257227747166478\\
100	0.0171626288749465\\
101	0.114504176711993\\
};
\addlegendentry{Disturbed system $u$};

\addplot [color=red,dashed]
  table[row sep=crcr]{%
0	0\\
1	1.8102582531006e-08\\
2	1.86701072246059e-08\\
3	1.80734787005638e-08\\
4	1.80734787005638e-08\\
5	-1.75964346556109e-09\\
6	-5.86575990119485e-08\\
7	-4.7062222888131e-08\\
8	-3.76930812945619e-08\\
9	-3.87790401689215e-08\\
10	-3.87790401689215e-08\\
11	-3.87790401689215e-08\\
12	-3.87790401689215e-08\\
13	-2.937156295242e-08\\
14	-3.15188653772183e-08\\
15	-3.25457694670463e-08\\
16	-3.23070053935093e-08\\
17	-3.23448916292741e-08\\
18	9.98578713431183e-08\\
19	8.5567890588862e-08\\
20	6.39812339833261e-08\\
21	6.40060390392652e-08\\
22	6.40963831328903e-08\\
23	6.41115273899031e-08\\
24	3.66326952851708e-08\\
25	4.2276210359837e-08\\
26	4.57350730665329e-08\\
27	4.55081530330818e-08\\
28	4.55548614124913e-08\\
29	4.55779817545066e-08\\
30	4.55779817545066e-08\\
31	4.55779817545066e-08\\
32	4.54966140544184e-08\\
33	4.55092729711379e-08\\
34	4.55211675261706e-08\\
35	4.55211675261706e-08\\
36	4.55211675261706e-08\\
37	4.55211675261706e-08\\
38	4.54991113914742e-08\\
39	4.55015263485974e-08\\
40	4.55050293242847e-08\\
41	4.55050293242847e-08\\
42	4.55050293242847e-08\\
43	4.55050293242847e-08\\
44	4.55050293242847e-08\\
45	4.55050293242847e-08\\
46	4.55050293242847e-08\\
47	4.55050293242847e-08\\
48	4.55050293242847e-08\\
49	4.54829758541235e-08\\
50	4.54853925876035e-08\\
51	4.54888946751124e-08\\
52	4.54888946751124e-08\\
53	4.54888946751124e-08\\
54	4.54888946751124e-08\\
55	4.54888946751124e-08\\
56	4.54888946751124e-08\\
57	4.54888946751124e-08\\
58	4.54888946751124e-08\\
59	4.54888946751124e-08\\
60	4.54668483103786e-08\\
61	4.54692632675018e-08\\
62	4.54727653550106e-08\\
63	4.54727653550106e-08\\
64	4.54727653550106e-08\\
65	4.54727653550106e-08\\
66	4.54727653550106e-08\\
67	4.54727653550106e-08\\
68	4.54727653550106e-08\\
69	4.54727653550106e-08\\
70	4.54727653550106e-08\\
71	4.54507260957042e-08\\
72	4.54531410528274e-08\\
73	4.54566413639794e-08\\
74	4.54566413639794e-08\\
75	4.54566413639794e-08\\
76	4.54566413639794e-08\\
77	4.54566413639794e-08\\
78	4.54566413639794e-08\\
79	4.54566413639794e-08\\
80	4.54566413639794e-08\\
81	4.54566413639794e-08\\
82	4.54346083219219e-08\\
83	4.54370223908667e-08\\
84	4.54405218138403e-08\\
85	4.54405218138403e-08\\
86	4.54405218138403e-08\\
87	4.54405218138403e-08\\
88	4.54405218138403e-08\\
89	4.54405218138403e-08\\
90	4.54405218138403e-08\\
91	4.54405218138403e-08\\
92	4.54405218138403e-08\\
93	4.54184967653885e-08\\
94	4.54209099461549e-08\\
95	4.54244084809501e-08\\
96	4.54244084809501e-08\\
97	4.54244084809501e-08\\
98	4.54244084809501e-08\\
99	4.54244084809501e-08\\
100	4.54244084809501e-08\\
101	4.54244084809501e-08\\
};
\addlegendentry{Nominal system $v$};

\end{axis}
\end{tikzpicture}%
\caption{Optimized input signal for the nominal system and true plant, $x(0)=\left[0\;0\right]^{\top}$.} \label{fig:SR.3}
\end{figure}
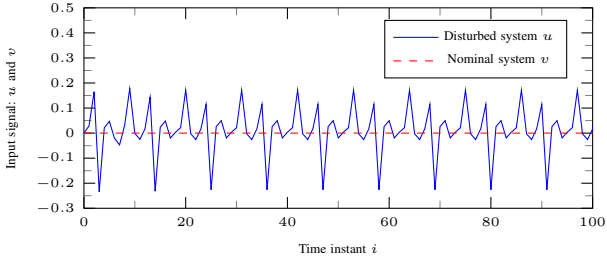

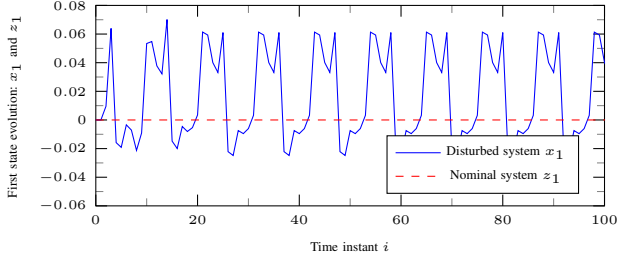
\begin{figure}[!t]
\centering
\pgfplotsset{
every axis legend/.append style={
at={(0.95,0.06)},
anchor=south east,
}}
\setlength\figureheight{0.15\textwidth}
\setlength\figurewidth{0.4\textwidth}
\begin{tikzpicture}

\begin{axis}[%
width=0.95092\figurewidth,
height=\figureheight,
at={(0\figurewidth,0\figureheight)},
scale only axis,
xmin=0,
xmax=100,
xlabel={Time instant $i$},
ymin=-0.06,
ymax=0.08,
ylabel={First state evolution: $x_{\scriptscriptstyle 1}$ and $z_{\scriptscriptstyle 1}$},
legend style={legend plot pos=left},
xtick={0,20,...,100},
ytick={-.06,-.04,-.02,0,.02,.04,.06,.08},
minor x tick num={1},
minor y tick num={1},
scaled ticks = false,
y tick label style={
/pgf/number format/.cd,
fixed,
precision=2,
/tikz/.cd
},
]
\addplot [color=blue,solid]
  table[row sep=crcr]{%
0	0\\
1	0\\
2	0.00958216799538264\\
3	0.0639765783975841\\
4	-0.0158282550035736\\
5	-0.019131096013411\\
6	-0.00340684250653387\\
7	-0.00709176439593472\\
8	-0.0212478895930748\\
9	-0.00911318460851451\\
10	0.0533890249241519\\
11	0.0547765551486405\\
12	0.037785571629049\\
13	0.0321875820624136\\
14	0.0701602875367055\\
15	-0.0147891395831154\\
16	-0.0199276011765471\\
17	-0.00457706934588571\\
18	-0.00805746820886043\\
19	-0.00535402716709791\\
20	0.00347990003909116\\
21	0.061391028942142\\
22	0.0593174124147685\\
23	0.0400739954156336\\
24	0.0331857557171709\\
25	0.0610528426649364\\
26	-0.022147313146284\\
27	-0.0248024620454282\\
28	-0.0074250893839305\\
29	-0.00952999030199897\\
30	-0.00601870157466179\\
31	0.00323656326221931\\
32	0.0613400188755568\\
33	0.0593313354821109\\
34	0.0401056559762749\\
35	0.0332143839680948\\
36	0.0610729551045532\\
37	-0.0221351212677249\\
38	-0.0247959082462861\\
39	-0.00742340720775937\\
40	-0.00952987107893115\\
41	-0.00601908792855776\\
42	0.00323612614403903\\
43	0.0613396823763122\\
44	0.0593311188235581\\
45	0.0401055331703558\\
46	0.0332143222197328\\
47	0.0610729282618138\\
48	-0.0221351304017962\\
49	-0.0247959095884645\\
50	-0.00742482816368984\\
51	-0.00953098468018411\\
52	-0.00601981278920423\\
53	0.00323571152477933\\
54	0.0613394718573784\\
55	0.059331026053759\\
56	0.0401055007169772\\
57	0.0332143166610751\\
58	0.0610729323265601\\
59	-0.0221351245043194\\
60	-0.0247959047067518\\
61	-0.00742624704715372\\
62	-0.00953209806322499\\
63	-0.00602053805247737\\
64	0.00323529642123236\\
65	0.061339260957874\\
66	0.0593309330358877\\
67	0.0401054681216199\\
68	0.0332143110303692\\
69	0.0610729363596508\\
70	-0.0221351186177984\\
71	-0.0247958998267748\\
72	-0.00742766575547009\\
73	-0.00953321130797666\\
74	-0.00602126322532997\\
75	0.00323488136960009\\
76	0.0613390500848348\\
77	0.059330840029744\\
78	0.0401054355304112\\
79	0.0332143054004147\\
80	0.0610729403922653\\
81	-0.0221351127319976\\
82	-0.0247958949473992\\
83	-0.00742908429060857\\
84	-0.00953432441683638\\
85	-0.00602198830966053\\
86	0.00323446636863484\\
87	0.0613388392375395\\
88	0.0593307470349566\\
89	0.0401054029431843\\
90	0.0332142997711509\\
91	0.0610729444243911\\
92	-0.0221351068469118\\
93	-0.0247958900686156\\
94	-0.00743050265258218\\
95	-0.00953543738981406\\
96	-0.00602271330547476\\
97	0.00323405141833435\\
98	0.0613386284159881\\
99	0.0593306540515271\\
100	0.0401053703599414\\
101	0.0332142941425804\\
};
\addlegendentry{Disturbed system $x_{\scriptscriptstyle 1}$};

\addplot [color=red,dashed]
  table[row sep=crcr]{%
0	0\\
1	0\\
2	5.43077475930179e-09\\
3	1.04695500264994e-08\\
4	1.39522500002765e-08\\
5	1.62479021585629e-08\\
6	1.36605526481437e-08\\
7	-8.94088504399364e-09\\
8	-2.72660154130152e-08\\
9	-3.85907984265532e-08\\
10	-4.62631053862213e-08\\
11	-5.10413931376959e-08\\
12	-5.38209622601466e-08\\
13	-5.53091068742796e-08\\
14	-5.29109728339869e-08\\
15	-5.08312669513737e-08\\
16	-4.93962916362911e-08\\
17	-4.83051952149989e-08\\
18	-4.75735242283684e-08\\
19	-3.48363052661124e-09\\
20	3.50010203659777e-08\\
21	5.86904784702027e-08\\
22	7.41025384933043e-08\\
23	8.35273338335151e-08\\
24	8.88932855148714e-08\\
25	8.26150468934563e-08\\
26	7.68157890350007e-08\\
27	7.28726006134503e-08\\
28	6.99971628845995e-08\\
29	6.80837067252226e-08\\
30	6.68939267716174e-08\\
31	6.62011428163064e-08\\
32	6.58297585491603e-08\\
33	6.56262694644858e-08\\
34	6.55371202263325e-08\\
35	6.55127740418304e-08\\
36	6.55180611392095e-08\\
37	6.55345475126212e-08\\
38	6.55527900169711e-08\\
39	6.55613531445936e-08\\
40	6.55671418031088e-08\\
41	6.55717617830616e-08\\
42	6.55748216639085e-08\\
43	6.55767213511443e-08\\
44	6.55778224822637e-08\\
45	6.55784092522648e-08\\
46	6.55786856506635e-08\\
47	6.55787880421444e-08\\
48	6.55788018635859e-08\\
49	6.55787771483792e-08\\
50	6.55714632731354e-08\\
51	6.55650652929938e-08\\
52	6.55610884307689e-08\\
53	6.55584871216326e-08\\
54	6.55568922515306e-08\\
55	6.55559810394635e-08\\
56	6.55555047993518e-08\\
57	6.5555287614002e-08\\
58	6.55552133554994e-08\\
59	6.55552101645288e-08\\
60	6.55552365447316e-08\\
61	6.55479951931426e-08\\
62	6.55416649116366e-08\\
63	6.55377423546364e-08\\
64	6.55351805984004e-08\\
65	6.55336124954011e-08\\
66	6.55327182532302e-08\\
67	6.55322520810112e-08\\
68	6.55320404237489e-08\\
69	6.55319688942752e-08\\
70	6.5531966822669e-08\\
71	6.55319934734815e-08\\
72	6.55247543443867e-08\\
73	6.55184261081468e-08\\
74	6.55145045882435e-08\\
75	6.55119434287137e-08\\
76	6.55103756405621e-08\\
77	6.55094815443054e-08\\
78	6.55090154240636e-08\\
79	6.55088037715281e-08\\
80	6.5508732226675e-08\\
81	6.55087301341062e-08\\
82	6.5508756765259e-08\\
83	6.55015196720889e-08\\
84	6.54951931500083e-08\\
85	6.54912727010739e-08\\
86	6.5488712241081e-08\\
87	6.54871448811959e-08\\
88	6.54862510292106e-08\\
89	6.54857850363399e-08\\
90	6.54855734416595e-08\\
91	6.54855019163807e-08\\
92	6.54854998244056e-08\\
93	6.54855264482983e-08\\
94	6.54782919837915e-08\\
95	6.54719677556829e-08\\
96	6.54680488531466e-08\\
97	6.54654894442242e-08\\
98	6.54639227534511e-08\\
99	6.54630293002085e-08\\
100	6.54625635274998e-08\\
101	6.54623520424133e-08\\
};
\addlegendentry{Nominal system $z_{\scriptscriptstyle 1}$};

\end{axis}
\end{tikzpicture}%
\caption{First state trajectory for the nominal system and true plant, $x(0)=\left[0\;0\right]^{\top}$.} \label{fig:SR.2}
\end{figure}

Table \ref*{table:SR.1} shows the error of the identified value (w.r.t. true plant) for all the system parameters, at several time instants. Thanks to the PE effect of $w$, and in view of the deterministic framework, the RLS algorithm has enough information to achieve perfect parameter estimation.

\setlength{\tabcolsep}{3pt}
\begin{table}[!t] 
\caption{Evolution of estimated parameters error [\%].} \label{table:SR.1}
\centering
%
}}] at (pt) {};

\end{tikzpicture}%
\caption{State trajectory for the nominal system and true plant.} \label{fig:SR.6}
\end{figure}

\subsection{Regulation capabilities} \label{sec:SR.1.2}
The regulation capabilities of the PE-Tube MPC are evaluated by initializing the scheme at $x(0)=\left[8\;8\right]^{\top}\in\mathcal{Z}_N$; Fig. \ref*{fig:SR.6} shows the state trajectory for both, nominal system and true plant. As expected, given the stabilizing characteristics of the proposed algorithm, the nominal state shows an asymptotic behaviour towards the origin. Due to the periodic PE disturbance ($w$), the state of the true plant is ultimately bounded to lie inside the set $\{\boldsymbol{0}_n\}\oplus\mathbb{S}$.

\section{Conclusions and future work} \label{sec:CFW.1}
In this paper, a new way to approach the dual problem of system identification and regulation via a model predictive controller has been presented. At each time instant, the input used to control the system is divided into a persistently exciting (PE) part and a regulator part. The PE part is treated as a bounded disturbance and a tube MPC, enhanced with a PE constraint, is used to regulate the plant. At the same time, thanks to the PE constraint, enough information is generated for convergence of the estimated parameters. Under the proper assumptions, the PE-Tube MPC has proved robust stability and recursive feasibility. Future work will be focused on the analysis of the transmission of persistence of excitation, the implementation of on-line variation of the allowable size of perturbations with the objective of feasibility enlargement and the shifting of the proposed approach to an stochastic framework.


\bibliographystyle{IEEEtran}
\bibliography{IEEEabrv,references_V2}

\begin{thebibliography}{10}
\providecommand{\url}[1]{#1}
\csname url@samestyle\endcsname
\providecommand{\newblock}{\relax}
\providecommand{\bibinfo}[2]{#2}
\providecommand{\BIBentrySTDinterwordspacing}{\spaceskip=0pt\relax}
\providecommand{\BIBentryALTinterwordstretchfactor}{4}
\providecommand{\BIBentryALTinterwordspacing}{\spaceskip=\fontdimen2\font plus
\BIBentryALTinterwordstretchfactor\fontdimen3\font minus
  \fontdimen4\font\relax}
\providecommand{\BIBforeignlanguage}[2]{{%
\expandafter\ifx\csname l@#1\endcsname\relax
\typeout{** WARNING: IEEEtran.bst: No hyphenation pattern has been}%
\typeout{** loaded for the language `#1'. Using the pattern for}%
\typeout{** the default language instead.}%
\else
\language=\csname l@#1\endcsname
\fi
#2}}
\providecommand{\BIBdecl}{\relax}
\BIBdecl

\bibitem{Adetola2011}
V.~Adetola and M.~Guay, ``{Robust adaptive MPC for constrained uncertain
  nonlinear systems},'' \emph{Int. J. Adapt. Control}, vol.~25, no.~2, pp.
  155--167, 2011.

\bibitem{Aswani2013}
A.~Aswani, H.~Gonzalez, S.~S. Sastry, and C.~Tomlin, ``{Provably safe and
  robust learning-based model predictive control},'' \emph{Automatica},
  vol.~49, no.~5, pp. 1216--1226, May 2013.

\bibitem{Mayne2013}
D.~Mayne, ``{An apologia for stabilising terminal conditions in model
  predictive control},'' \emph{Int. J. Control}, vol.~86, no.~11, pp.
  2090--2095, Nov. 2013.

\bibitem{Boccia2014}
A.~Boccia, L.~Gr{\"{u}}ne, and K.~Worthmann, ``{Stability and feasibility of
  state constrained MPC without stabilizing terminal constraints},''
  \emph{Syst. Control Lett.}, vol.~72, pp. 14--21, Oct. 2014.

\bibitem{Kolmanovsky}
I.~Kolmanovsky and E.~G. Gilbert, ``{Theory and Computation of Disturbance
  Invariant Sets for Discrete-Time Linear Systems},'' \emph{Math. Probl. Eng.},
  vol.~4, no.~4, pp. 317 -- 367, 1988.

\bibitem{Mayne2014}
D.~Q. Mayne, ``{Model predictive control: Recent developments and future
  promise},'' \emph{Automatica}, vol.~50, no.~12, pp. 2967--2986, Nov. 2014.

\bibitem{Goodwin1984}
G.~C. Goodwin and K.~S. Sin, \emph{{Adaptive filtering, prediction and
  control}}, 1st~ed.\hskip 1em plus 0.5em minus 0.4em\relax Englewoods Cliffs,
  NJ: Prentice Hall, 1984.

\bibitem{Mamboundou2011}
J.~Mamboundou and N.~Langlois, ``{Indirect Adaptive Model Predictive Control
  Supervised by Fuzzy Logic},'' in \emph{Proc. of the 2011 Int. Conf. Fuzzy
  Systems}.\hskip 1em plus 0.5em minus 0.4em\relax Taipei: IEEE, Jun. 2011, pp.
  2979--2986.

\bibitem{Balderud2011}
N.~A. Wahab, R.~Katebi, J.~Balderud, and M.~Rahmat, ``{Data-driven adaptive
  model-based predictive control with application in wastewater systems},''
  \emph{IET Control Theory Appl.}, vol.~5, no.~6, pp. 803--812, Apr. 2011.

\bibitem{Tanaskovic2014}
M.~Tanaskovic, L.~Fagiano, R.~Smith, and M.~Morari, ``{Adaptive receding
  horizon control for constrained MIMO systems},'' \emph{Automatica}, vol.~50,
  no.~12, pp. 3019--3029, Dec. 2014.

\bibitem{Fukushima2007}
H.~Fukushima, T.-H. Kim, and T.~Sugie, ``{Adaptive model predictive control for
  a class of constrained linear systems based on the comparison model},''
  \emph{Automatica}, vol.~43, no.~2, pp. 301--308, Feb. 2007.

\bibitem{Wang2014a}
X.~Wang, Y.~Sun, and K.~Deng, ``{Adaptive model predictive control of uncertain
  constrained systems},'' in \emph{Proc. 2014 American Control Conf.}\hskip 1em
  plus 0.5em minus 0.4em\relax Portland, OR: IEEE, Jun. 2014, pp. 2857--2862.

\bibitem{Genceli1996}
H.~Genceli and M.~Nikolaou, ``{New Approach to Constrained Predictive Control
  with Simultaneous Model Identification},'' \emph{AIChE J.}, vol.~42, no.~10,
  pp. 2857--2868, 1996.

\bibitem{Vuthandam1997a}
P.~Vuthandam and M.~Nikolaou, ``{Constrained MPCI: A Weak Persistent Excitation
  Approach},'' \emph{AIChE J.}, vol.~43, no.~9, pp. 2279--2288, 1997.

\bibitem{Shouche1998}
M.~Shouche, H.~Genceli, P.~Vuthandam, and M.~Nikolaou, ``{Simultaneous
  Constrained Model Predictive Control and Identification of DARX Processes},''
  \emph{Automatica}, vol.~34, no.~12, pp. 1521--1530, 1998.

\bibitem{Marafioti2013}
G.~Marafioti, R.~R. Bitmead, and M.~Hovd, ``{Persistently exciting model
  predictive control},'' \emph{Int. J. Adapt. Control}, vol.~28, no.~6, pp.
  536--552, 2014.

\bibitem{Larsson2015}
C.~A. Larsson, C.~R. Rojas, X.~Bombois, and H.~Hjalmarsson, ``{Experimental
  evaluation of model predictive control with excitation (MPC-X) on an
  industrial depropanizer},'' \emph{J. Process. Contr.}, vol.~31, pp. 1--16,
  2015.

\bibitem{Heirung2015}
T.~A.~N. Heirung, B.~Foss, and B.~E. Ydstie, ``{MPC-based dual control with
  online experiment design},'' \emph{J. Process. Contr.}, vol.~32, pp. 64--76,
  2015.

\bibitem{Rathousky2013}
J.~Rathousky and V.~Havlena, ``{MPC-based approximate dual controller by
  information matrix maximization},'' \emph{Int. J. Adapt. Control}, vol.~27,
  no.~11, pp. 974--999, 2013.

\bibitem{Zacekova2013}
E.~{\v{Z}}{\'{a}}{\v{c}}ekov{\'{a}}, S.~Pr{\'{\i}}vara, and M.~P{\v{c}}olka,
  ``{Persistent excitation condition within the dual control framework},''
  \emph{J. Process. Contr.}, vol.~23, no.~9, pp. 1270--1280, 2013.

\bibitem{Ferramosca2013a}
A.~H. Gonzalez, A.~Ferramosca, G.~A. Bustos, J.~L. Marchetti, and D.~Odloak,
  ``{Model predictive control suitable for closed-loop re-identification},''
  \emph{Syst. Control Lett.}, vol.~69, pp. 23--33, 2014.

\bibitem{Weiss2014}
A.~Weiss and S.~D. Cairano, ``{Robust Dual Control MPC with Guaranteed
  Constraint Satisfaction},'' in \emph{Proc. 53rd Conf. Decision and
  Control}.\hskip 1em plus 0.5em minus 0.4em\relax Los Angeles, CA: IEEE, 2014,
  pp. 6713--6718.

\bibitem{Mayne2005}
D.~Q. Mayne, M.~M. Seron, and S.~V. Rakovi{\'{c}}, ``{Robust model predictive
  control of constrained linear systems with bounded disturbances},''
  \emph{Automatica}, vol.~41, no.~2, pp. 219--224, 2005.

\bibitem{Ljung1999}
L.~Ljung, \emph{{System Identification Theory for the user}}, 2nd~ed.\hskip 1em
  plus 0.5em minus 0.4em\relax Upper Saddle River, NJ: Prentice Hall, 1999.

\bibitem{Rawlings2014}
J.~B. Rawlings and D.~Q. Mayne, \emph{{Model Predictive Control: Theory and
  Design}}, electronic~ed.\hskip 1em plus 0.5em minus 0.4em\relax Madison,
  Wisconsin: Nob Hill, 2014.

\bibitem{Green1985}
M.~Green and J.~B. Moore, ``{Persistence of excitation in linear systems},''
  \emph{Syst. Control Lett.}, vol.~7, no.~5, pp. 351--360, 1986.

\bibitem{Sutherland1995}
W.~A. Sutherland, \emph{{Introduction to metric and topological spaces}},
  1st~ed.\hskip 1em plus 0.5em minus 0.4em\relax Oxford: Clarendon Press, 1995.

\end{thebibliography}

\end{document}